\documentclass[10pt,conference,letterpaper]{IEEEtran}
\usepackage[T1]{fontenc}
\usepackage{amsmath,amssymb,amsthm,booktabs,array,microtype,balance}
\usepackage[hidelinks]{hyperref}
\usepackage{cite}
\usepackage{float}
\usepackage{graphicx}
\newtheorem{theorem}{Theorem}
\newtheorem{proposition}{Proposition}
\newcommand{\F}{\mathbb F_2}
\newcommand{\Z}{\mathbb Z}
\newcommand{\HX}{H_X}
\newcommand{\HZ}{H_Z}
\newcommand{\wt}{\operatorname{wt}}
\newcommand{\rk}{\operatorname{rank}_{\F}}
\newcommand{\row}{\operatorname{row}}

\newcommand{\supp}{\operatorname{supp}}
\newcommand{\code}[3]{[[#1,#2,#3]]}
\newcommand{\runex}[1]{\par\smallskip\noindent\textit{#1}\ }
\newcommand{\HXq}[1]{H_X^{(#1)}}
\newcommand{\HZq}[1]{H_Z^{(#1)}}
\usepackage{xcolor}

\usepackage{geometry}
\title{High-Girth QLDPC Codes from Circulant and Affine Permutation Matrices}
\author{\IEEEauthorblockN{Vahid Nourozi, Mohsen Moradi, Roxana Smarandache and David G. M. Mitchell}
}

\author{
Vahid~Nourozi$^\dag$, Roxana~Smarandache$^\ddag$,  %Roxana~Smarandache$^\ddag$, 
and David~G.~M.~Mitchell$^\dag$,
\\[2ex]
$^\dag$Klipsch School of Electrical and Computer Engineering,
New Mexico State University\\
Las Cruces, NM 88003, USA,
Email: nourozi@nmsu.edu; dgmm@nmsu.edu.\\
$^\ddag$Departments of Mathematics and Electrical Engineering, University of Notre Dame\\
Notre Dame, IN 46556, USA,
Email: rsmarand@nd.edu.\vspace{-1mm}
}

\begin{document}
\maketitle
\begin{abstract}
We study two explicit constructions of binary Calderbank--Shor--Steane quantum low-density parity-check (QLDPC) codes with check weight six. The first uses a fixed circulant shift pattern and yields parameters $\code{6N}{2N+2}{6}$ with Tanner graph girth twelve for $N=31$ and every $N\ge37$. A graph incidence interpretation gives the ranks and girths, and a uniform weight six logical operator establishes the exact distance. The second construction uses fixed matrix polynomials with circulant and affine permutation coefficients. Here, a polynomial commutation identity guarantees orthogonality for every cyclic lift. Four realizations give $\code{192}{16}{8}$, $\code{288}{16}{12}$, $\code{384}{16}{14}$, and $\code{576}{32}{20}$, with Tanner girth eight and computationally certified distances. We give reproducible construction data, distinguish classical cycle distance from quantum distance, and identify the limitations of both families.
\end{abstract}
%\begin{IEEEkeywords}
%Quantum LDPC codes, CSS codes, circulant permutation matrices, affine permutation matrices, Tanner girth.
%\end{IEEEkeywords}

\section{Introduction}
Binary Calderbank--Shor--Steane (CSS) codes are specified by two orthogonal parity-check matrices~\cite{calderbank,steane}. For quantum low-density parity-check (QLDPC) codes, sparsity must be achieved together with useful dimension and distance~\cite{mackay,breuckmann}. These requirements are distinct: orthogonality makes the stabilizer generators commute, whereas the distance depends on undetectable errors outside the stabilizer group. A large Tanner graph girth is desirable, but does not by itself determine quantum distance.

Permutation blocks provide a convenient description of structured check matrices. Quantum quasi-cyclic constructions~\cite{hagiwara,komoto}, pre-lifted classical protographs~\cite{prelift}, and algebraic girth conditions~\cite{girth, 9762914} explain how block permutation labels affect short cycles in the associated graph. For finite quantum codes, generalized bicycle and bivariate-bicycle constructions~\cite{panteleev,bravyi}, two-block group algebra codes~\cite{lin}, provide established frameworks in this domain. Affine permutations offer additional choices beyond circulant shifts that can be leveraged to improve the cycle distribution~\cite{kasai}. Related work by the authors and collaborators considers affine voltage lifts of bivariate-bicycle codes~\cite{nouroziapm, Nourozi2026sep4}, Q-learning searches over polynomial supports~\cite{nouroziq}, and polynomial-domain lifted-product constructions~\cite{nourozilp}. 

This paper studies two explicit constructions of binary Calderbank–Shor–Steane (CSS) QLDPC codes with check weight six. The main result is an explicit circulant-based family with
parameters $[[6N,2N+2,6]]$ for $N=31$ and every $N\ge37$. Orthogonality, rank, Tanner graph girth, and minimum distance are established analytically. In addition, Proposition~\ref{prop:girthbound} gives a restriction that does not depend on the circulant representation:
if every column of $H_X$ has weight two and $H_Z$
contains a nonzero row of weight $w$, then
$g_X\le2w$. Thus the first construction attains the
largest possible Tanner girth when both check matrices
have column weight two and row weight six.

The second construction uses a fixed commuting pair
of matrix polynomials with circulant and affine
permutation coefficients. Orthogonality and the
row and column weights hold for every outer lift size.
The ranks and distances reported for selected lift
sizes are finite computational results; they do not
establish an asymptotic distance growth law. Column weight two constructions with Tanner graph girth
twelve were previously obtained by
Komoto and Kasai~\cite{komoto}, and the mixed
construction admits a two-block group algebra
description~\cite{lin}. Our contribution is the
explicit specification and parameter analysis of
the displayed families, including a uniform
exact-distance proof for the first family.
We use $N=31$ and $q=2$ as worked examples.
\vspace{-0.2mm}
\section{Background}
\subsection{Permutation Blocks}
All matrix sums are over $\F$. For a permutation $f:\Z_N\to\Z_N$, define $P_f$ by
\begin{equation}
 \begin{gathered}
 [P_f]_{y,t}=1\ \Longleftrightarrow\ y=f(t),\\
 P_fP_g=P_{f\circ g},\qquad P_f^T=P_{f^{-1}}.
 \end{gathered}
 \label{eq:pm}
\end{equation}
A circulant permutation matrix (CPM) is $C_s=P_{t\mapsto t+s}$. An affine permutation matrix (APM) is $F_{a,b}=P_{t\mapsto at+b}$, where $\gcd(a,N)=1$. In particular, $C_0=I_N$ and
\begin{equation}
 C_sC_t^T=C_{s-t},\qquad F_{a,b}C_s=C_{as}F_{a,b}.
 \label{eq:affine}
\end{equation}
The second identity follows since both maps send $t$ to $at+as+b$. All shift subscripts and affine map values in
these definitions are reduced modulo $N$. These component permutation matrices are used to build the stabilizer matrices of a CSS code as described in the next subsection. 

\runex{Example 1:}
For $N=4$, $C_1$ sends $(0,1,2,3)$ to $(1,2,3,0)$, whereas $F_{3,1}$ sends them to $(1,0,3,2)$, i.e., 
\begin{equation}
 C_1=\begin{bmatrix}0&0&0&1\\1&0&0&0\\0&1&0&0\\0&0&1&0\end{bmatrix},\quad
 F_{3,1}=\begin{bmatrix}0&1&0&0\\1&0&0&0\\0&0&0&1\\0&0&1&0\end{bmatrix}.
 \label{eq:toy}
\end{equation}

\subsection{CSS Codes and Orthogonality Conditions}

We define the classical codes
\[
\begin{aligned}
C_X&=\{u\in\mathbb F_2^n\,|\,H_Xu^T=0\},\\
C_Z&=\{u\in\mathbb F_2^n\,|\,H_Zu^T=0\},
\end{aligned}
\]
where $u$ is a row vector and $H_X,\>H_Z$ are the parity-check matrices of appropriate size. Let $J_X,J_Z$ be the numbers of block rows and $L$ the common number of block columns in $H_X$ and $H_Z$, respectively, composed with one $N\times N$ permutation matrix in each block. The expanded matrices have $m_X=NJ_X$, $m_Z=NJ_Z$ rows and $n=NL$ columns. Thus $J_X$ counts blocks of checks, while $m_X$ counts individual binary checks. A qubit in block column $v$ and block position $t$ has index $vN+t$, $v\in\{0,1,\ldots,L-1\}$ and $t\in\{0,1,\ldots,N-1\}$. We denote the $N\times N$ matrix (a block) in location $(i,v)$ of the stabilizer matrices as $H_X(i,v),\>H_Z(i,v)\in \mathbb{F}_2^{N\times N}$, respectively, with corresponding scalar entries indexed as $[H_X]_{a,b}\in \mathbb{F}_2$ or from the (block) permutation matrix as $[H_X(i,v)]_{y,t}\in \mathbb{F}_2$, where the subscripts are over the appropriate range with indexing beginning at zero.

The CSS condition $H_XH_Z^T=0$ is equivalent to
$C_Z^\perp\subseteq C_X$, or equivalently
$C_X^\perp\subseteq C_Z$.
When this condition holds, let
$\mathcal Q=\mathcal Q(H_X,H_Z)$ denote the binary
CSS code whose X-type and Z-type stabilizer
generators are specified by the rows of
$H_X$ and $H_Z$, respectively~\cite{calderbank,steane}, we refer to these stabilizers as checks. For non-zero blocks $H_X(i,v)=P_{f_{iv}}$ and $H_Z(j,v)=P_{g_{jv}}$, set $S_{ij}=\{v\,|\,H_X(i,v)\ne0,\ H_Z(j,v)\ne0\}$. The CSS orthogonality condition can then be expressed as

\begin{equation}
 (\HX\HZ^T)(i,j)=\sum_{v\in S_{ij}}P_{f_{iv}\circ g_{jv}^{-1}}=0.
 \label{eq:cssblock}
\end{equation}
For each input coordinate, every output must occur an even number of times. 
A standard two-block CSS construction uses
square binary matrices $A$ and $B$ satisfying
$AB=BA$~\cite{panteleev,lin}. For square binary matrices, a useful special case is
\begin{equation}
 \begin{gathered}
 \HX=[A\mid B],\qquad \HZ=[B^T\mid A^T],\\
 \HX\HZ^T=AB+BA=0\quad\Longleftrightarrow\quad AB=BA.
 \end{gathered}\label{eq:bicycle}
\end{equation}
In this paper, the presented Construction I uses the blockwise condition \eqref{eq:cssblock}, while  Construction II constructs a commuting pair for \eqref{eq:bicycle}. In each case, a fixed permutation pattern supplies an algebraic orthogonality certificate before the lift size is chosen.

Let $r_X=\operatorname{rank}_{\mathbb F_2}H_X$
and $r_Z=\operatorname{rank}_{\mathbb F_2}H_Z$.
The CSS code $\mathcal Q(H_X,H_Z)$ has parameters
$[[n,k,d]]$, where $k$ is the number of encoded
qubits and $d_X,d_Z$ are its X-type and Z-type
logical distances. Explicitly,
\begin{align}
 k&=n-\rk\HX-\rk\HZ,\label{eq:k}\\
 d_X&=\min\{\wt(u)\,|\,u\in\ker\HZ\setminus\row(\HX)\},\nonumber\\
 d_Z&=\min\{\wt(u)\,|\,u\in\ker\HX\setminus\row(\HZ)\},\nonumber\\
 d&=\min(d_X,d_Z).\label{eq:distance}
\end{align}
The Tanner graph of each check matrix has a check vertex per row, a qubit vertex per column, and an edge per nonzero entry. Its girth is denoted by $g_X$ or $g_Z$; these are the girths of the two separate graphs.

\runex{Example 2: coordinates.}
For $N=31$, column $7$ of $C_3$ contains its
unique nonzero entry in row $10$.
When $C_3$ occupies block $(i,v)=(1,2)$ of $H_X$,
this entry has global row index
$31i+10=41$ and global column index
$31v+7=69$. Thus $[H_X]_{41,69}=1$ This convention is used throughout the first construction.

\section{A Circulant-Based Family with Girth Twelve}\label{sec:construction1}
\subsection{Case study: a fixed pattern and its incidence graphs}
Choose the integer lists
\begin{equation}
 \begin{split}
 a&=(0,1,3,8,12,18),\\
 b&=(0,-5,-5,0,12,12),\\
 c&=(0,0,2,8,2,8),
 \end{split}\label{eq:lists}
\end{equation}
and reduce them modulo $N$. Define
\begin{equation}
 \begin{split}
 \HX&=\begin{bmatrix}I&I&I&I&I&I\\C_0&C_1&C_3&C_8&C_{12}&C_{18}\end{bmatrix},\\
 \HZ&=\begin{bmatrix}C_0&C_{-5}&C_{-5}&C_0&C_{12}&C_{12}\\C_0&C_0&C_2&C_8&C_2&C_8\end{bmatrix}.
 \end{split}
 \label{eq:first}
\end{equation}
Here, $J_X=J_Z=2$ and $L=6$. Every block is non-zero, meaning that it contains one permutation matrix (a shift of zero represents the identity). Each matrix is $2N\times6N$, with row weight six and column weight two. 

 Each (qubit) column corresponds to an edge joining one check in the first block row to one in the second. Thus $\HX$ is the biadjacency matrix of a bipartite $(2,6)$-regular Tanner graph on $2N$ vertices. 
Let $G_X$ be the auxiliary (incidence) graph whose vertices correspond
to the rows of $H_X$. Since each column has two nonzero entries, this defines an edge joining the corresponding
two vertices in $G_X$. In the Tanner graph, this edge is a
two-edge path through the associated qubit vertex.
Consequently,
\[
g_X=2\,\operatorname{girth}(G_X).
\]
The same relation holds for $H_Z$ and its auxiliary graph

\begin{theorem}\label{th:first}
For $N=31$ or every integer $N\ge37$, \eqref{eq:first} defines a CSS code with parameters $\code{6N}{2N+2}{6}$, row weight six, column degrees two, and $g_X=g_Z=12$. Both check matrices have rank $2N-1$.
\end{theorem}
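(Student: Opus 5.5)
The plan is to read both check matrices as vertex–edge incidence matrices of bipartite circulant multigraphs and then translate every claimed property into a statement about these graphs.

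\textbf{Setup and orthogonality.} In $H_X$, qubit $(v,t)$ has a one in row $t$ of the first block row and in row $a_v+t$ of the second. So $G_X$ is the bipartite graph on $\{x_y\}\cup\{z_y\}$, $y\in\Z_N$, with edges $x_t\!-\!z_{t+a_v}$. In $H_Z$, qubit $(v,t)$ lies in rows $t+b_v$ and $t+c_v$, so $G_Z$ is the analogous bipartite graph with difference set $e_v=c_v-b_v$, i.e.\ $e=(0,5,7,8,-10,-4)$. Row weight six and column weight two are immediate. For orthogonality I apply \eqref{eq:cssblock} together with $C_sC_t^T=C_{s-t}$. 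The four blocks of $H_XH_Z^T$ are $\sum_v C_{-b_v}$, $\sum_v C_{-c_v}$, $\sum_v C_{a_v-b_v}$ and $\sum_v C_{a_v-c_v}$. The exponent lists are $-b$, $-c=(0,0,-2,-8,-2,-8)$, $a-b=(0,6,8,8,0,6)$ and $a-c=(0,1,1,0,10,10)$. In each list every value occurs an even number of times, as integers, so every block vanishes for all $N$.

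\textbf{Rank and $k$.} Over $\F$, the incidence matrix of a connected graph on $V$ vertices has rank $V-1$. The graph $G_X$ is connected because the differences $a_v-a_0$ include $1$, which generates $\Z_N$. The graph $G_Z$ is connected because $e$ contains $7$ and $8$, whose difference is $1$. Hence $\rk H_X=\rk H_Z=2N-1$, and \eqref{eq:k} gives $k=6N-2(2N-1)=2N+2$.

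\textbf{Girth.} Both graphs are bipartite, so $\operatorname{girth}(G_X)=6$ amounts to two conditions. First, there are no parallel edges, meaning the $a_v$ are distinct mod $N$. Second, there are no $4$-cycles: $a_i-a_j+a_k-a_l\equiv0$ with $i\ne j\ne k\ne l\ne i$ cyclically is impossible, i.e.\ the nonzero differences $a_i-a_j$ are distinct mod $N$. The set $\{0,1,3,8,12,18\}$ is a Golomb ruler, and $e+10=\{0,6,10,15,17,18\}$ is also one; I will check this by listing its fifteen differences. Two distinct differences $d\ne d'$ differ by at most $36$ in absolute value, so $d\not\equiv d'$ whenever $N\ge37$. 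The case $N=31$ I will verify by direct finite enumeration, where the needed coincidences mod $31$ simply do not occur. The upper bound $g\le12$ comes from exhibiting a $6$-cycle, i.e.\ an integer alternating relation $a_{i_1}-a_{i_2}+a_{i_3}-a_{i_4}+a_{i_5}-a_{i_6}=0$. An example is $0-1+3-8+18-12=0$, and a similar one holds for $e$. Alternatively, the bound follows from Proposition~\ref{prop:girthbound}. This gives $g_X=2\operatorname{girth}(G_X)=12$, and likewise $g_Z=12$.

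\textbf{Distance.} $\ker H_X$ is the cycle space of $G_X$, so every nonzero element has weight at least $\operatorname{girth}(G_X)=6$; similarly for $H_Z$. Hence $d\ge6$. For $d\le6$ I will take the $6$-cycle $u$ of $G_X$ coming from the integer relation above. Because that relation holds over $\Z$, it exists for every $N$. The row space of $H_Z$ is the cut space of $G_Z$, and the cut space is the orthogonal complement of the cycle space. So $u\notin\row(H_Z)$ if and only if $u$ meets some cycle $w$ of $G_Z$ in an odd number of qubits. The main obstacle is to produce such a $w$ uniformly in $N$. I would first try a short cycle of $G_Z$, also given by an integer alternating relation in $e$, chosen to share exactly one qubit $(v,t)$ with $u$. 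I would then check that the other coincidences would force congruences between integers of absolute value below $N$, which cannot hold for $N\ge37$; the case $N=31$ is checked directly. If no short $w$ works, the fallback is to use the $N$-independent cycle $\sum_t(\text{column }v)$ structure, i.e.\ the long "wrap-around" cycles. This yields $d_Z\le6$, and the symmetric argument with the roles of $G_X$ and $G_Z$ exchanged yields $d_X\le6$. Therefore $d=6$.
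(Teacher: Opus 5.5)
Everything up to $d\ge 6$ is sound. Your direct connectivity argument for $G_Z$ (via $e_3-e_2=1$) and your check that $e+10$ is a Golomb ruler replace the paper's isomorphism \eqref{eq:isomorphism} perfectly well.

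The gap is in the step that makes the distance exact. You never show that your weight-six cycle $u$ lies outside $\row(H_Z)$, and the fallback you name cannot show it. Your relation $0-1+3-8+18-12=0$ uses every label once. So, taking $t=0$, $u=\{(0,0),(1,-1),(2,-1),(3,-6),(4,0),(5,-6)\}$ has exactly one qubit in each block column. That is the same profile as a row of $H_Z$, and each such row is itself a six-cycle of $G_X$.

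The wrap-around elements of $\ker H_Z$ are the unions $\mathbf{1}_v+\mathbf{1}_{v'}$ of two block columns. A single block column is a perfect matching of $G_Z$, not a cycle. Your $u$ meets each such union in exactly two qubits. For prime $N$ such as $31$ or $37$, each union is a single Hamiltonian cycle, so no finer wrap-around cycles exist.

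Short cycles of $G_Z$ that are rows of $H_X$ do not help either. Since $u\in\ker H_X$, $u$ is orthogonal to all of them. The certificate $w$ must therefore be a non-stabilizer cycle of $G_Z$, and producing it is exactly what your proposal defers with ``I would first try''.

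You can close the gap in either of two ways.

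\emph{Your primary route, with an explicit certificate.} Recall that qubit $(v,s)$ of $H_Z$ sits in row $s+b_v$ of the first block row and row $s+c_v$ of the second. The qubits $(0,0),(1,0),(2,0),(0,2),(1,7),(2,5)$ form a six-cycle $w$ of $G_Z$ through rows $0,-5,2$ of the first block row and rows $0,2,7$ of the second; each of these rows is hit exactly twice. For $N\ge 9$, $w$ shares only the qubit $(0,0)$ with $u$. Hence $\langle u,w\rangle=1$, so $u\notin\row(H_Z)=(\ker H_Z)^\perp$.

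\emph{The paper's route, which is simpler.} Choose a six-cycle of $G_X$ whose block multiplicities do not all have the same parity. For example, $a_0-a_4+a_5-a_4+a_5-a_4=0$ gives the vector \eqref{eq:logical6}, with block parities $(1,0,0,0,1,0)$. Each row of $H_Z$ has exactly one $1$ per block, so every element of $\row(H_Z)$ has equal parity in all six blocks. This vector is therefore a weight-six logical operator for every admissible $N$, and $d=6$ follows.
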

\begin{proof}
\emph{Orthogonality:} The four block products obtained via \eqref{eq:cssblock} correspond to block-row pairs $(0,0)$, $(0,1)$, $(1,0)$, and $(1,1)$. Using \eqref{eq:affine} we obtain the shift lists $-b$, $-c$, $a-b$, $a-c$ with components in each sum given as
\begin{equation}
\begin{array}{c|rrrrrr}
 -b&0&5&5&0&-12&-12\\
 -c&0&0&-2&-8&-2&-8\\
 a-b&0&6&8&8&0&6\\
 a-c&0&1&1&0&10&10
\end{array}.
\label{eq:difftable}
\end{equation}
Each row has even multiplicities, so all four sums vanish. For example, the third gives $C_0+C_6+C_8+C_8+C_0+C_6=0$. Orthogonality holds for every $N$.

\emph{Ranks:} The X incidence graph $G_X$ is connected because $a_1-a_0=1$. % there is a path x_{0,t}\to x_{1,t+1}\to x_{0,t+1}. Repeating it reaches every vertex in the first class, and the zero-shift edges connect these to every vertex in the second class.: two successive matching edges allow a unit translation within either vertex class. 
To see the rank directly, let $\lambda\in\mathbb{F}_2^{m_X}$ assign a binary coefficient to each check row. Requiring $\lambda^T\HX=0$ implies that the coefficients of $\lambda$ must be equal. This follow from the fact that the $G_X$ is connected and the column weight of $H_X$ is two. Thus the only solutions are $\lambda =0$ or $\lambda = 1$ %dependencies are the zero combination and the sum of all rows, 
and the rank is therefore $m_X-1 = 2N-1$. Multiplying block column $v$ of $\HZ$ by $C_{-b_v}$ modifies its row block shifts to $0$ and $c_v-b_v$, respectively. With
\begin{equation}
 \pi=(3,2,1,0,5,4),\quad (c-b)_{\pi(v)}=8-a_v,
 \label{eq:isomorphism}
\end{equation}
a column permutation, coordinate negation, and translation of one vertex class give a graph isomorphism. The same proof for $H_X$ follows from above and hence $\rk\HZ=2N-1$ as well. Finally,~\eqref{eq:k} gives $k=2N+2$.

\emph{Girth:}
The positive differences among the entries of $a$ are
$D=\{1,\ldots,12,15,17,18\}$.
The thirty elements of $D\cup(-D)$ are distinct
modulo $31$. They also remain distinct modulo
every $N\ge37$, since the difference between any
two has absolute value at most $36$.
Thus the $G_X$ has neither parallel
edges nor 4-cycles for these $N$ and $g_X=2G_X\geq 12$~\cite{prelift, girth}. Since any $2\times r$ array of circulants has girth bounded above by 12 \cite{sv12}, we have $g_X=12$. A similar proof holds for $G_Z$ and then $g_X, g_Z = 12$.
\emph{Distance:} %The result follows immediately from the fact that the Tanner graphs of $H_X$ and $H_Z$ have girth 12. Every cycle in a $2\times r$ array of circulants corresponds to a codeword of half that length \cite{sv12}. Thus $d_X = g_X/2 =6$ and $d_X=d_Z=d=6$. 
Every nonzero kernel vector of a graph incidence matrix contains a cycle, so \eqref{eq:distance} gives $d_X,d_Z\ge6$. For the reverse bound, let $e_t$ be the length $N$ unit column vector at $t\pmod N$, and set
\begin{equation}
 \begin{split}
 z_0&=e_1,\qquad z_1=z_2=z_3=0,\\
 z_4&=e_1+e_{-11}+e_{-5},\qquad z_5=e_{-11}+e_{-5}.
 \end{split}\label{eq:logical6}
\end{equation}
This vector has weight six for every allowed $N$. Its two X block syndromes vanish 
\begin{gather*}
 z_0+z_4+z_5=0,\\
 \begin{aligned}
 z_0+C_{12}z_4+C_{18}z_5
 &=e_1+(e_{13}+e_1+e_7)\\
 &\quad +(e_7+e_{13})=0.
 \end{aligned}
\end{gather*}
Every row of $\HZ$ has one nonzero entry in each of the six blocks. Any vector in $\row(\HZ)$ therefore has the same parity in all six blocks. The block parities of \eqref{eq:logical6} are $(1,0,0,0,1,0)$, so $z\notin\row(\HZ)$. Consequently $d_Z=d=6$.
\end{proof}

\subsection{Reusable conditions and the girth limitation} %{\bf 
The argument extends to other six-tuples $a,b,c\in\Z_N^6$. Sufficient conditions are: (i) all multiplicities in $-b,-c,a-b,a-c$ are even; (ii) there are a permutation $\pi$ of $\{0,\ldots,5\}$, $\epsilon\in\{\pm1\}$, and $\tau\in\Z_N$ such that
\begin{equation}
 (c-b)_{\pi(v)}=\epsilon a_v+\tau\quad(0\le v<6);
 \label{eq:designiso}
\end{equation}
and (iii) the thirty ordered differences $a_v-a_w$, $v\ne w$, are distinct, with $\gcd(N,a_1-a_0,\ldots,a_5-a_0)=1$. These conditions give, respectively, CSS orthogonality, isomorphic incidence graphs, and connected graphs without 4-cycles. The thirty differences are ordered: each of the six entries corresponding to the columns is subtracted from each of the other five, giving $6\cdot5=30$ values. They imply
\begin{equation}
 n=6N,\quad k=2N+2,\quad g_X=g_Z=12,\quad d=6.
 \label{eq:generalparams}
\end{equation}

\begin{proposition}\label{prop:girthbound}
If every column of $\HX$ has weight two and $\HZ$ has a nonzero row of weight $w$, CSS orthogonality implies $g_X\le2w$. The analogous statement holds with X and Z interchanged.
\end{proposition}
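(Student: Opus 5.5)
Since every column of $\HX$ has weight two, I will use the auxiliary incidence graph $G_X$ introduced before Theorem~\ref{th:first}: its vertices are the rows of $\HX$, and each column (qubit) is an edge joining the two checks it touches. Parallel edges are allowed and correspond to 4-cycles in the Tanner graph. As observed there, $g_X=2\,\operatorname{girth}(G_X)$, so it suffices to exhibit a cycle in $G_X$ of length at most $w$.

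Let $h$ be a nonzero row of $\HZ$ of weight $w$, and let $E_h\subseteq E(G_X)$ be the set of edges (qubits) in $\supp(h)$, so $|E_h|=w$. CSS orthogonality $\HX\HZ^T=0$ says $\HX h^T=0$. Row $x$ of $\HX$ dotted with $h$ counts, modulo two, the edges of $E_h$ incident to vertex $x$ in $G_X$. Hence every vertex has even degree in the subgraph $(V(G_X),E_h)$, i.e.\ $E_h$ is a nonempty even-degree (Eulerian) edge set. This is the standard fact already invoked in the distance part of Theorem~\ref{th:first}: the kernel of a graph incidence matrix is its cycle space.

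A nonempty edge set in which every vertex has even degree contains a cycle. Start at any edge and walk without reusing edges; even degrees guarantee we can always leave a vertex we entered, so the walk must eventually revisit a vertex, closing a cycle. A pair of parallel edges counts as a 2-cycle. This cycle uses only edges of $E_h$, so its length is at most $w$. Hence $\operatorname{girth}(G_X)\le w$ and $g_X\le 2w$. The statement with X and Z interchanged follows by the identical argument using $\HZ\HX^T=(\HX\HZ^T)^T=0$.

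The only delicate point is the bookkeeping between $G_X$ and the Tanner graph in degenerate cases. A parallel edge gives $\operatorname{girth}(G_X)=2$ and a Tanner 4-cycle, which is consistent with the bound. Loops cannot occur, since a column of weight two touches two distinct rows. I expect no real obstacle beyond stating this correspondence carefully.
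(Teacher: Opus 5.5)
Your proof is correct and follows the same route as the paper's one-line argument: a nonzero row of $H_Z$ lies in $\ker H_X$, and a weight-$w$ kernel vector of a column-weight-two matrix forces a Tanner cycle of length at most $2w$. Your version is more careful than the paper's wording (``associated with a $2w$-cycle''). You note that the even-degree edge set need only contain a cycle of length at most $w$ in $G_X$, rather than form a single cycle.
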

\begin{proof}
The rows in $H_Z$ are codewords in $\ker H_X$, and each codeword in a $2\times n$ matrix of weight $w$ is associated with a $2w$-cycle, hence $g_X\leq 2w$. 
\end{proof}
Hence replacing CPMs by other permutations cannot increase either girth above twelve while column degree two and check weight six are retained. Classical prelifting can avoid a circulant girth bound~\cite{prelift}; it does not remove this additional CSS bound. Conversely, a shortest cycle can be a stabilizer, so Proposition~\ref{prop:girthbound} does not upper-bound quantum distance.

\begin{table}[t]
\caption{Construction I from the fixed lists \eqref{eq:lists}. All distances are exact by Theorem~\ref{th:first}; $g_X=g_Z=12$, check weight six, and column degrees two.}
\label{tab:first}\centering
\begin{tabular}{rrrrr}
\toprule
$N$&$[[n,k,d]]$&$r_X=r_Z$&$R$&$\sigma$\\\midrule
31&$[[186,64,6]]$&61&0.3441&12.3871\\
37&$[[222,76,6]]$&73&0.3423&12.3243\\
40&$[[240,82,6]]$&79&0.3417&12.3000\\
41&$[[246,84,6]]$&81&0.3415&12.2927\\
43&$[[258,88,6]]$&85&0.3411&12.2791\\
47&$[[282,96,6]]$&93&0.3404&12.2553\\
63&$[[378,128,6]]$&125&0.3386&12.1905\\
128&$[[768,258,6]]$&255&0.3359&12.0938\\\bottomrule
\end{tabular}
\end{table}

Table~\ref{tab:first} lists realizations obtained by
changing only $N$. The rate is
\[
R=\frac{k}{n}=\frac13+\frac{1}{3N},
\]
and the minimum distance remains six.

\section{Mixed Circulant--Affine Construction}
\label{sec:mix}
We next give a fixed polynomial construction with column degree three.
Its coefficients combine circulant shifts with an affine permutation,
using the established permutation block framework~\cite{kasai}.
The modulus used is $N_0=24$, where $C_s$ represents $t\mapsto t+s$ and
$F=F_{7,7}$ represents $t\mapsto7t+7$ on $\mathbb{Z}_{N_0}$. These permutation matrices have size $N_0\times N_0$, all shifts are reduced modulo $N_0$, and $I=I_{N_0}$.
%All block subscripts are reduced modulo $24$ and $I=I_{24}$.
With a formal variable $z$ commuting with every coefficient, define
\begin{equation}
\begin{split}
A(z)&=\begin{bmatrix}I+zC_2&zF\\F&I+zC_2\end{bmatrix},\\
B(z)&=\begin{bmatrix}I&zC_8+z^3C_{10}\\C_8+z^2C_{22}&I\end{bmatrix}.
\end{split}
\label{eq:mix-base}
\end{equation}
These are $2 \times 2$ arrays whose entries are polynomials in $z$ with coefficients that are $24 \times 24$ permutation matrices. The integer $q \geq 2$ is the outer lift size. Let $S_q$ be the $q \times q$ permutation matrix of $u \mapsto u+1$ on $\mathbb{Z}_q$. Each qubit and check then has an outer coordinate $u \in \mathbb{Z}_q$, independent of the internal coordinate $t \in \mathbb{Z}_{24}$. The outer lift of a matrix polynomial $M(z)$ replaces each term $z^h M_h$ by $S_q^h \otimes M_h$ so that
\begin{equation}
\begin{gathered}
M_q=\sum_h S_q^h\otimes M_h\quad\text{if }M(z)=\sum_hz^hM_h,\\
\HXq{q} = [A_q \mid B_q], \qquad \HZq{q} = [B_q^T \mid A_q^T].
\end{gathered}
\label{eq:mix-specialize}
\end{equation}

Kronecker multiplication gives $(MM^\prime)_q = M_q M^\prime_q$. In this notation $zC_8$ is a coefficient with outer exponent one, and its lift is $S_q \otimes C_8$; the first factor shifts the outer coordinate and the second shifts the internal coordinate. For a matrix polynomial $M(z) = \sum_h z^h M_h$, define its adjoint
\[
M^\star(z) = M(z^{-1})^T = \sum_h z^{-h} M_h^T .
\]
The inverse variable accounts for the reversal of a cyclic shift under transposition. Since $S_q^T = S_q^{-1}$, we have $(S_q^h \otimes M_h)^T = S_q^{-h} \otimes M_h^T$, and hence $(M^\star)_q = M_q^T$ \cite{nourozilp}. The polynomial check arrays are
\begin{equation}\label{eq:mix-adjoint}
H_X = [A \mid B], \qquad H_Z = [B^\star \mid A^\star],
\end{equation}
and their lifts are the binary checks in \eqref{eq:mix-specialize}. Since $(M^\star)^\star = M$, we have $H_X H_Z^\star = AB + BA$, and lifting gives $\HXq{q} (\HZq{q})^T = A_q B_q + B_q A_q$. Thus the single polynomial identity $AB = BA$ guarantees CSS orthogonality for every outer lift size $q$.

\begin{theorem}\label{th:mix}
The matrices in~\eqref{eq:mix-base} satisfy $AB=BA$.
For every integer $q\ge2$, the checks in~\eqref{eq:mix-specialize}
define a length $96q$ CSS code. Each check matrix has $48q$ rows,
row weight six, and column weight three.
\end{theorem}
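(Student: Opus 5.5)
The plan is to verify the commutation identity blockwise, using only \eqref{eq:affine}. Then I would read off the sizes and weights by checking that no two permutation terms inside a lifted block overlap. CSS orthogonality for every $q$ then follows from the adjoint argument given just before the statement, namely $\HXq{q}(\HZq{q})^T=A_qB_q+B_qA_q=(AB+BA)_q$.

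\emph{Commutation.} I would expand the four entries of $AB$ and $BA$ as polynomials in $z$ with permutation coefficients. Circulants commute with each other, so the only nontrivial step is moving $F=F_{7,7}$ past a circulant via $FC_s=C_{7s}F$ from \eqref{eq:affine}, with $7s$ reduced modulo $24$. The $(1,1)$ entries are
\[
I+zC_2+zFC_8+z^3FC_{22}\quad\text{and}\quad I+zC_2+zC_8F+z^3C_{10}F .
\]
These agree because $7\cdot8=56\equiv8$ and $7\cdot22=154\equiv10 \pmod{24}$. For the $(2,2)$ entry I would compare $zFC_8+z^3FC_{10}$ with $zC_8F+z^3C_{22}F$, which agree because $7\cdot10=70\equiv22$. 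The off-diagonal entries involve only circulants together with the lone $zF$ or $F$ term. For example, the $(1,2)$ entry on both sides is
\[
zF+zC_8+z^2C_{10}+z^3C_{10}+z^4C_{12},
\]
and the $(2,1)$ entry is $F+C_8+zC_{10}+z^2C_{22}+z^3I$ on both sides, using $C_{24}=I$. This gives $AB=BA$.

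\emph{Sizes and weights.} Each of $A$ and $B$ is a $2\times2$ array, so the lifted matrices $A_q$ and $B_q$ are $48q\times48q$. Hence $\HXq{q}$ and $\HZq{q}$ each have $48q$ rows and $96q$ columns. Every block of $A(z)$ and $B(z)$ is a sum of one or two terms $z^hM_h$ with $M_h$ a permutation matrix. I would show that, for every $q\ge2$, the lifted terms $S_q^h\otimes M_h$ within a block have pairwise disjoint supports. When two outer exponents differ modulo $q$, the $S_q$ factors already separate the supports. This covers $I+zC_2$, whose exponents $0$ and $1$ are distinct mod $q$ for all $q\ge2$.

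\emph{Main obstacle: possible collisions.} The one place that needs care is a pair of exponents that can coincide modulo $q$. In $zC_8+z^3C_{10}$ the exponents coincide when $q=2$, and in $C_8+z^2C_{22}$ they coincide when $q=2$. In both cases the internal permutations $C_8$ and $C_{10}$, respectively $C_8$ and $C_{22}$, differ by a nonzero shift. They therefore disagree at every point, so their supports are still disjoint and no cancellation occurs.

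With disjointness in hand, each row of $[A_q\mid B_q]$ has weight equal to the number of terms in the corresponding block rows of $A$ and $B$, which is $3+3=6$. Each column has weight equal to the number of terms in a block column of $A$ or of $B$, which is $3$ in all four cases. Since $\HZq{q}=[B_q^T\mid A_q^T]$, its rows correspond to the block columns of $B$ and $A$, giving weight $3+3=6$. Its columns correspond to block rows, giving weight $3$. This completes the plan.
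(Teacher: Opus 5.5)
Your proposal is correct and follows essentially the same route as the paper. The paper writes $D=I+zC_2$, $U=zC_8+z^3C_{10}$, $V=C_8+z^2C_{22}$ and verifies $DU=UD$, $DV=VD$, $zFV=UF$, $FU=zVF$ using $FC_s=C_{7s}F$, which your entrywise expansion checks directly. Its weight argument matches yours: distinct outer exponents separate supports, and when $q=2$ the internal shifts $8,10$ and $8,22$ are distinct, so no entries cancel.
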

\begin{proof}
Set $D=I+zC_2$, $U=zC_8+z^3C_{10}$, and
$V=C_8+z^2C_{22}$, so
$$A=\begin{bmatrix}D&zF\\F&D\end{bmatrix} \textrm{ and
 }B=\begin{bmatrix}I&U\\V&I\end{bmatrix}.$$
The circulant coefficients commute, giving $DU=UD$ and $DV=VD$.
The affine identity $F_{a,b}C_s=C_{as}F_{a,b}$ gives
\begin{equation}
FC_8=C_8F,\quad FC_{22}=C_{10}F,\quad FC_{10}=C_{22}F.
\label{eq:mix-affine}
\end{equation}
For example, $7\cdot22\equiv10\pmod{24}$.
Consequently $zFV=UF$ and $FU=zVF$.
The two block products are
\begin{equation}
\begin{split}
AB&=\begin{bmatrix}D+zFV&DU+zF\\F+DV&FU+D\end{bmatrix},\\
BA&=\begin{bmatrix}D+UF&zF+UD\\VD+F&zVF+D\end{bmatrix}.
\end{split}
\label{eq:mix-products}
\end{equation}
The identities $DU = UD$, $DV = VD$, $zFV = UF$, and $FU = zVF$ make the corresponding blocks of (19) equal, so $AB = BA$. Since the lift preserves sums and products, lifting gives $\HXq{q} (\HZq{q})^T = A_q B_q + B_q A_q = (AB + BA)_q = 0$ over $\mathbb{F}_2$ for every $q \geq 2$. 
In $D_q = I_q \otimes I + S_q \otimes C_2$, the two terms have outer exponents $0$ and $1$, so for $q \geq 2$ they lie in different $24 \times 24$ blocks and no entries cancel. In $B_q$, the two terms of each off-diagonal block differ in outer
shift by two. They occupy distinct outer positions for $q>2$.
For $q=2$, the shifts coincide, but the internal shifts $8,10$
and $8,22$ are distinct, so no entries cancel.
Hence $A_q$ and $B_q$ each have row and column weight three.
Their concatenations have the claimed weights and dimensions.
\end{proof}

The off-diagonal blocks can also be obtained directly
from the commutation conditions.
With $V=C_8+z^2C_{22}$, define
\[
U=zFVF^{-1}=zC_8+z^3C_{10}.
\]
Then $UF=zFV$ by construction.
Since $F^2=C_8$ commutes with $V$, we also have
$FU=zVF$.
The diagonal polynomial $D=I+zC_2$ commutes with
$U$ and $V$ because their coefficients are circulants.
These identities provide algebraic conditions for
selecting the blocks; they do not by themselves
give a lower bound on the quantum distance.

\runex{Example 3: forming the $q=2$ checks.}
Here $S_2=\left(\begin{smallmatrix}0&1\\1&0\end{smallmatrix}\right)$
acts on two outer positions, and each $C_s$ acts on 24 internal
positions. Reducing $S_2^2=I_2$ and $S_2^3=S_2$ gives
\begin{align}
A_2={}&I_2\otimes\begin{bmatrix}I&0\\F&I\end{bmatrix}
 +S_2\otimes\begin{bmatrix}C_2&F\\0&C_2\end{bmatrix},\nonumber\\
B_2={}&I_2\otimes\begin{bmatrix}I&0\\C_8+C_{22}&I\end{bmatrix}
 +S_2\otimes\begin{bmatrix}0&C_8+C_{10}\\0&0\end{bmatrix}.
\label{eq:mix-q2}
\end{align}
Both are $96\times96$; substitution into~\eqref{eq:mix-specialize}
produces the complete $96\times192$ binary checks.

We index rows and columns from zero. A row or column of $A_2$ or $B_2$ is labeled by $(u, b, t)$ with index $48u + 24b + t$, where $u \in \mathbb{Z}_2$ is the outer coordinate, $b \in \{0,1\}$ is the block position, and $t \in \mathbb{Z}_{24}$ is the internal coordinate. Columns of the $B_2$ half of $\HXq{2}$ are offset by $96$. The term $zC_8$ in the upper-right block of $B$ maps $(0,1,0)$ to $(1,0,8)$: column $24$ maps to row $56$. Hence $[\HXq{2}]_{56,120} = 1$. 
For an affine entry, the lower-left $F$ in $A$ sends $(0,0,1)$ to $(0,1,14)$, giving $[\HXq{2}]_{38,1} = 1$. A complete check obtained by the same rule is
\begin{equation}
%\supp\bigl(\HX(2)_{56,:}\bigr)=\{6,31,56,120,142,152\}.
\operatorname{supp}\bigl([\HXq{2}]_{56,:}\bigr) = \{6, 31, 56, 120, 142, 152\}.
\label{eq:mix-row}
\end{equation}
This also exhibits the three entries from each half and total weight six.

%\balance
\runex{Example 4: dimension and exact distance.}
Binary elimination gives $\operatorname{rank}_{\mathbb{F}_2} \HXq{2} = \operatorname{rank}_{\mathbb{F}_2} \HZq{2} = 88$, hence $k = 192 - 88 - 88 = 16$. 
Let the row vector $w$ have support
\begin{equation}
\supp(w)=\{11,20,59,68,123,132,171,180\}.
\label{eq:mix-witness}
\end{equation}
Direct multiplication gives both $\HXq{2} w^T = 0$ and $\HZq{2} w^T = 0$, while binary elimination gives%Direct multiplication gives both $\HX(2)w^T=0$ and
%$\HZ(2)w^T=0$, while binary elimination gives
\begin{equation}
\operatorname{rank}_{\mathbb{F}_2} \begin{bmatrix} \HXq{2} \\ w \end{bmatrix} = \operatorname{rank}_{\mathbb{F}_2} \begin{bmatrix} \HZq{2} \\ w \end{bmatrix} = 89.
\label{eq:mix-augmented}
\end{equation}
Thus $w$ lies outside both stabilizer row spaces and supplies
weight eight logical operators of both types.
The matching lower bound uses the archived completed support
enumerations at weights four and six on both CSS sides.
Every kernel vector has even weight because every column has odd weight: $H x^T = 0$ implies $\mathbf{1}^T H x^T = \mathbf{1}^T x^T = 0$ over $\mathbb{F}_2$. 
A weight-two kernel vector would select two identical columns and therefore create a 4-cycle. The verified girth eight excludes this possibility.
Together these facts establish $d_X=d_Z=8$ and the
$\code{192}{16}{8}$ parameters.

\begin{table}[t]
\caption{Outer lifts of the fixed mixed base. All check matrices
have row weight six, column weight three, and Tanner girth eight.
Exact distances rely on the archived completed enumerations.}
\label{tab:mix-parameters}
\centering
\begin{tabular}{crrr}
\toprule
$q$&$[[n,k,d]]$&$\rk\HX=\rk\HZ$&$g_X=g_Z$\\\midrule
2&$\code{192}{16}{8}$&88&8\\
3&$\code{288}{16}{12}$&136&8\\
4&$\code{384}{16}{14}$&184&8\\
6&$\code{576}{32}{20}$&272&8\\\bottomrule
\end{tabular}
\end{table}
Table~\ref{tab:mix-parameters} changes only the outer
size $q$; all polynomial coefficients and exponents
remain fixed.
For each reported distance $d$, the archived
calculations exclude logical operators at the smaller
even weights and provide weight $d$ operators with
zero syndrome outside the stabilizer row space.

The performed search enumerates supports of bounded weight and, whenever a support has zero syndrome, tests whether it lies in the stabilizer row space. It suffices to consider supports with no proper nonempty zero syndrome subset: if a logical operator $S$ contains such a subset $T$, then $T$ and $S \setminus T$ both have zero syndrome and cannot both be stabilizers, so one of them is a lighter logical operator. Symmetry reduction uses only permutations that preserve both stabilizer row spaces. A search stage that did not finish gives no lower bound.

For context, the same construction has the established two-block
group algebra description~\cite{lin}.
On $G=\Z_{24}\rtimes_7\Z_{2q}$ use
$(t,v)(s,r)=(t+7^vs,v+r)$ and identify $v=2u+b$.
Then $A_q$ is identity plus left multiplication by $(7,1),(2,2)$,
while $B_q$ is identity plus right multiplication by $(22,5),(8,1)$.
Left and right multiplication commute.
This identifies the algebraic framework underlying the explicit blocks;
the mixed example is not a new general left/right construction.

\section{Numerical Results}
\label{sec:numerical_results}

In this section, we present some numerical results for  Construction I, described in Section~\ref{sec:construction1}, with $N=128$ and the
 shift lists (\ref{eq:lists}). The corresponding blocks are
\[
\begin{aligned}
H_X(0,v)&=I_N, & H_X(1,v)&=C_{a_v},\\
H_Z(0,v)&=C_{b_v}, & H_Z(1,v)&=C_{c_v},
\end{aligned}
\qquad v=0,\ldots,5,
\]
where all shifts are taken modulo $128$. These arrays
define binary matrices of size $256\times768$, each
with rank $255$, giving $k=768-255-255=258$.
The resulting code has parameters $[[768,258,6]]$.
The reference code is constructed from Definition~5
of Komoto and Kasai~\cite{komoto} with $L=6$ and $P=128$.
Both codes have parameters $[[768,258,6]]$, check
weight six, column weight two in each check matrix,
and Tanner graph girths $g_X=g_Z=12$.

Let $A_6^X$ and $A_6^Z$ denote the numbers of
X-type and Z-type logical operators of weight six,
respectively.
Distinct binary supports are counted separately,
without identifying operators that differ by a
stabilizer.
Our work gives $A_6^X=A_6^Z=3840$, whereas the
reference code gives $A_6^X=A_6^Z=4864$.
Thus the weight six multiplicity is reduced by
approximately $21.05\%$ on each CSS side.
This comparison concerns minimum weight logical
operators; no claim is made about the multiplicities
at higher weights or about a proportional reduction
in frame error rate (FER).

Fig.~\ref{fig:fer_768} compares the measured FERs under BP--OSD--10 decoding, reinforcement learning based
sequential decoding (RL--S), and reinforcement learning guided
multi-branch decoding (RLMB).
RL--S uses sequential BP updates~\cite{moradi_seq2026}
with an update schedule learned through reinforcement
learning~\cite{moradi_rl2026}.
The multi-branch decoder follows~\cite{Nourozi2026sep2}
and is denoted by RLMB in the figure.
Both RL decoders use $T=32$ maximum iterations.
At $p=10^{-3}$ and $p=5\times10^{-3}$, our work achieves
a lower measured FER under all three decoders. This improvement is obtained without changing the
length, rate, minimum distance, check weight, or girth.
The smaller number of minimum weight logical supports
provides a structural explanation consistent with the
observed gain: such operators give rise to errors with
identical syndromes but different logical effects. However, support multiplicity alone does not determine FER, and its $21.05\%$ reduction should not be interpreted as an equal reduction in decoding failures.

\begin{figure}[!t]
    \centering
    \includegraphics[width=\columnwidth]{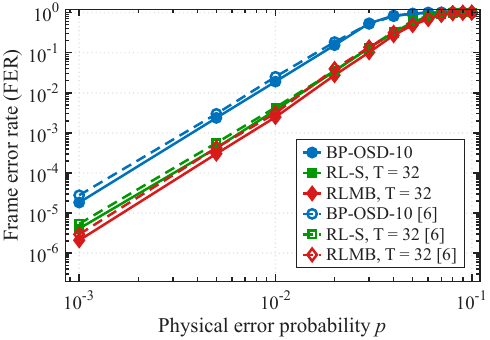}
    \caption{FER comparison between our work
    (solid curves) and the Komoto--Kasai
    instance~\cite{komoto} (dashed curves).
    Both codes have parameters $[[768,258,6]]$.
    RL--S and RLMB use $T=32$.}
    \label{fig:fer_768}\vspace{3mm}
\end{figure}

\section{Structural Limitations and Conclusion}
For check weight six and constant column degrees $j_X,j_Z$, let $\delta_X=m_X-\rk\HX$ and $\delta_Z=m_Z-\rk\HZ$ count independent row dependencies. Counting matrix entries gives
\begin{equation}
 R=1-\frac{j_X+j_Z}{6}+\frac{\delta_X+\delta_Z}{n}.
 \label{eq:rate}
\end{equation}
Construction I has $(j_X,j_Z)=(2,2)$ and one dependency on each side, giving $R=1/3+1/(3N)$. Construction II has $(j_X,j_Z)=(3,3)$, so its positive dimension comes entirely from row dependencies. For $q=2$, both matrices have 96 rows and rank 88; hence $\delta_X=\delta_Z=8$ and $R=16/192$.

The first family has constant distance six. Increasing $N$ increases the dimension and block length, but its relative distance tends to zero. Proposition~\ref{prop:girthbound} also shows why changing only the type of permutation cannot increase girth above twelve under the same column-degree and check-weight constraints. To establish a larger quantum distance for a different pattern, every shorter kernel vector must lie in the corresponding stabilizer space. A shortest cycle alone does not determine the quantum distance. For the mixed family, Theorem~\ref{th:mix} proves orthogonality and weights for every $q\ge2$. The listed ranks, girths, and distances are separate finite calculations. 

The two families illustrate different weight six designs. The circulant pattern has a uniform proof of rate and distance, while the mixed polynomial base gives larger certified distances at selected lengths. Their complete block descriptions specify the binary matrices without a separate orthogonality computation for each lift, and the worked examples show how the algebraic description determines individual checks, ranks, and logical operators.

\section*{Acknowledgement}
This material is based upon work supported by the National Science Foundation under Grant No. CCF-2145917. 
\clearpage
\balance
\bibliographystyle{IEEEtran}
\bibliography{bibliography}
\end{document}